\begin{document}

\title{Spectral Analysis of Piezoelectric Tensors}
\author{Yannan Chen\footnote{%
    School of Mathematics and Statistics, Zhengzhou University, Zhengzhou 450001, China ({\tt ynchen@zzu.edu.cn}).
    This author was supported by the National Natural Science Foundation of China (Grant No. 11401539),
    the Development Foundation for Excellent Youth Scholars of Zhengzhou University (Grant No. 1421315070),
    and the Hong Kong Polytechnic University Postdoctoral Fellowship.}
\and Antal J\'akli\footnote{Liquid Crystal Institute, Kent State University,
     Kent, OH 44242, USA ({\tt ajakli@kent.edu}).},
  \and Liqun Qi\footnote{%
    Department of Applied Mathematics, The Hong Kong Polytechnic University,
    Hung Hom, Kowloon, Hong Kong ({\tt maqilq@polyu.edu.hk}).
    This author's work was partially supported by the Hong Kong Research Grant Council
    (Grant No. PolyU  501913, 15302114, 15300715 and 15301716).}
    }

\date{\today}
\maketitle

\begin{abstract}  A third order real tensor is called a piezoelectric-type tensor
  if it is partially symmetric with respect to its last two indices.
  The piezoelectric tensor is a piezoelectric-type tensor of dimension three.
  We introduce C-eigenvalues and C-eigenvectors for piezoelectric-type tensors.
  Here, ``C'' names after Curie brothers, who first discovered the piezoelectric effect.
  We show that C-eigenvalues always exist, they are invariant under orthonormal transformations,
  and for a piezoelectric-type tensor, the largest C-eigenvalue and its C-eigenvectors form
  the best rank-one piezoelectric-type approximation of that tensor. This means that for the piezoelectric tensor, its largest C-eigenvalue
determines the highest piezoelectric coupling constant.   We further show that for the piezoelectric tensor,
  the largest C-eigenvalue corresponds to the electric displacement vector with the largest $2$-norm in the piezoelectric effect
  under unit uniaxial stress, and the strain tensor with the largest $2$-norm
  in the converse piezoelectric effect under unit electric field vector.
  Thus, C-eigenvalues and C-eigenvectors have concrete physical meanings in
  piezoelectric effect and converse piezoelectric effect.
  Finally, we apply C-eigenvalues and associated C-eigenvectors
  for various piezoelectric crystals with different symmetries.

  \textbf{Key words.} Piezoelectric tensor, piezoelectric effect, converse piezoelectric effect,
  eigenvalue, eigenvector, crystal.
\end{abstract}

\newtheorem{Theorem}{Theorem}[section]
\newtheorem{Definition}[Theorem]{Definition}
\newtheorem{Lemma}[Theorem]{Lemma}
\newtheorem{Corollary}[Theorem]{Corollary}
\newtheorem{Proposition}[Theorem]{Proposition}

% LaTeX definitions
\newcommand{\REAL}{\mathbb{R}}
\newcommand{\COMPLEX}{\mathbb{C}}
\newcommand{\SPHERE}{\mathbb{S}^2}
\newcommand{\diff}{\,\mathrm{d}}
\newcommand{\st}{\mathrm{s.t.}}
\newcommand{\T}{\top}
\newcommand{\vt}[1]{{\bf #1}}%{\bm{#1}}
\newcommand{\x}{\vt{x}}
\newcommand{\y}{\vt{y}}
\newcommand{\z}{\vt{z}}
\newcommand{\Ten}{\mathcal{T}}
\newcommand{\A}{\mathcal{A}}
\newcommand{\B}{\mathcal{B}}
\newcommand{\D}{\mathcal{D}}
\newcommand{\RESULTANT}{\mathrm{Res}}

\newpage
\section{Introduction}

Third order tensors have extensive applications in physics and engineering.   Examples include piezoelectric tensors in crystal study \cite{CC-80, Ha-07, KPG-08, Lo-89, Ny-85, ZTP-13}, third order symmetric traceless-tensors in liquid crystal study \cite{CQV-17, GV-16, Vi-15} and third order susceptibility tensors in nonlinear optics study \cite{Je-70, KGTRU-04}.
Among these third order tensors, the most popular one is the piezoelectric tensor, which plays the key role in piezoelectric effect and converse piezoelectric effect.   Piezoelectricity was discovered by Jacques Curie and Pierre Curie in 1880 \cite{CC-80}.
In the next year, the converse piezoelectric effect was predicted by Lippmann \cite{Li-81}
and confirmed by Curies \cite{CC-81} immediately.
Now it has wide applications in the production and detection of sound, generation of high voltages, electronic frequency
generation, microbalances, and ultra fine focusing of optical assemblies \cite{KPG-08}.

Eigenvalues of higher order tensors were introduced and studied in the recent years \cite{Qi-05, QL-17, QWW-08}.  Particularly, tensor eigenvalues were applied to third order symmetric traceless-tensors in liquid crystal study \cite{CQV-17, GV-16, Vi-15}.    Can we also apply tensor eigenvalues to piezoelectric tensors?    We found that to make it physically meaningful, we may introduce some new eigenvalue definitions for piezoelectric tensors.

In the next section, we introduce C-eigenvalues and C-eigenvectors for piezoelectric-type tensors. Here, ``C'' names after Curie brothers. A third order real tensor is called a piezoelectric-type tensor if it is partially symmetric with respect to its last two indices.   For solid materials, the last two indices of the piezoelectric tensor is symmetric since the stress tensor is symmetric.    Thus, for solid materials, the piezoelectric tensor is a piezoelectric-type tensor of dimension three.  This is not true for liquid crystal,  where there is dissipation \cite{Ja-10, JPSRd-09, JTSSS-02}.  We show that C-eigenvalues always exist, they are invariant under orthonormal transformations, and for a piezoelectric-type tensor, the largest C-eigenvalue and its C-eigenvectors form the best rank-one approximation of that tensor.   This means that for the piezoelectric tensor, its largest C-eigenvalue
determines the highest piezoelectric coupling constant.

In Section 3, we further show that for the piezoelectric tensor in solid crystal, the largest C-eigenvalue corresponds to the electric displacement vector with the largest $2$-norm in the piezoelectric effect under unit uniaxial stress, and the strain tensor with the largest $2$-norm in the converse piezoelectric effect under unit electric field vector.  Thus, C-eigenvalues and C-eigenvectors have concrete physical meanings in piezoelectric effect and converse piezoelectric effect.

In Section 4, we compute C-eigenvalues and associated C-eigenvectors of typical piezoelectric tensors for various crystal classes.

Then, in Section 5, we show that the definition of C-eigenvalues for piezoelectric tensors is different from the definitions of matrix singular values if we look piezoelectric tensors as a $3 \times 6$ matrices \cite{Ny-85, ZTP-13}.

% Some concluding remarks are made in Section 6.
Our results are summarized in Section 6.

\section{Spectral Analysis of Piezoelectric-Type Tensors}

First, we introduce the definition of piezoelectric-type tensors.

\begin{Definition}
  Let $\A=[a_{ijk}]\in\REAL^{n\times n\times n}$ be a third-order $n$ dimensional tensor.
  If the later two indices of $\A$ are symmetric, i.e., $a_{ijk}=a_{ikj}$ for all $j$ and $k$,
  then $\A$ is called a piezoelectric-type tensor.
\end{Definition}

Obviously, the number of independent elements of a piezoelectric-type tensor
$\A\in\REAL^{n\times n\times n}$ is $\frac{1}{2}n^2(n+1)$.
For $\alpha,\beta\in\REAL$ and $\A=[a_{ijk}],\B=[b_{ijk}]\in\REAL^{n\times n\times n}$,
$\alpha\A+\beta\B=[\alpha a_{ijk}+\beta b_{ijk}]\in\REAL^{n\times n\times n}$.
The inner product of $\A$ and $\B$ is $\langle\A,\B\rangle=\sum_{i,j,k}a_{ijk}b_{ijk}$.
A nonnegative scalar $$\|\A\|_F\equiv\sqrt{\langle\A,\A\rangle}$$
is the Frobenius norm of $\A$.
For vectors $\x\in\REAL^n$ and $\y\in\REAL^n$, we denote a scalar
\begin{equation*}
    \x\A\y\y \equiv \sum_{i,j,k}a_{ijk}x_iy_jy_k \in\REAL
\end{equation*}
as a product of the piezoelectric tensor $\A$ with vectors $\x$ and $\y$.
Moreover, we define
\begin{equation*}
    \A\y\y \equiv \left(
               \begin{array}{c}
                 \sum_{j,k}a_{1jk}y_jy_k \\
                 \vdots \\
                 \sum_{j,k}a_{njk}y_jy_k \\
               \end{array}
             \right)\in\REAL^n \quad\text{and}\quad
    \x\A\y \equiv \left(
               \begin{array}{c}
                 \sum_{i,j}x_ia_{ij1}y_j \\
                 \vdots \\
                 \sum_{i,j}x_ia_{ijn}y_j \\
               \end{array}
             \right)\in\REAL^n.
\end{equation*}

Let $\lambda$ be a real number and $\x,\y\in\REAL^n$ be unit vectors, i.e, $\x^\top \x = 1$ and $\y^\top \y = 1$.
Elements of a rank-one piezoelectric-type tensor $\lambda \x\circ\y\circ\y\in\REAL^{n\times n\times n}$ are
$[\lambda \x\circ\y\circ\y]_{ijk} = \lambda x_iy_jy_k$
for $i,j,k=1,2,\ldots,n$. Here, ``$\circ$'' means the outer product.
If a scalar $\lambda \in \REAL$ and vectors $\x,\y\in\REAL^n$ minimize the following optimization problem
\begin{equation} \label{rank-1}
    \min \left\{ \|\A - \lambda \x\circ\y\circ\y\|_F^2 : \lambda \in \REAL, \x^\top \x = 1, \y^\top \y = 1 \right\},
\end{equation}
then $\lambda \x\circ\y\circ\y$ is called the best rank-one piezoelectric-type approximation of $\A$.

Using these notations, we present the following definition of C-eigenvalues and
C-eigenvectors of a piezoelectric tensor. Here, ``C'' names after Curie brothers.

\begin{Definition}
  Let $\A\in\REAL^{n\times n\times n}$ be a piezoelectric-type tensor.
  If there exist a scalar $\lambda\in\REAL$, vectors $\x\in\REAL^n$ and $\y\in\REAL^n$ satisfying
  the following system
  \begin{equation}\label{C-eig}
    \A\y\y = \lambda\x, \quad \x\A\y = \lambda\y, \quad \x^\T\x=1, \quad\text{ and }\quad \y^\T\y=1,
  \end{equation}
  then, $\lambda$ is called a C-eigenvalue of $\A$,
  $\x$ and $\y$ are called associated left and right C-eigenvectors, respectively.
\end{Definition}

We have the following theorem.

\begin{Theorem}\label{Th-2.3}
Let $\A$ be a piezoelectric-type tensor.
Then we have the following conclusions.

\medskip

(a) There exist C-eigenvalues of $\A$ and associated left and right C-eigenvectors.

\medskip

(b) Suppose that $\lambda$, $\x$ and $\y$ are  a C-eigenvalue and its associated left and right C-eigenvectors of $\A$, respectively. Then $$\lambda = \x\A\y\y.$$
  Furthermore, $(\lambda,\x,-\y)$, $(-\lambda,-\x,\y)$, and $(-\lambda,-\x,-\y)$
  are also C-eigenvalues and their associated C-eigenvectors of $\A$.
\medskip

(c) Denote the largest C-eigenvalue of $\A$ and its associated left and right C-eigenvectors as
$\lambda^*$, $\x^*$ and $\y^*$, respectively.  Then
\begin{equation}\label{maximum}
  \lambda^* = \max~\{\x\A\y\y : \x^\top \x =1, \y^\top \y = 1\}.
\end{equation}
Furthermore, $\lambda^*\x^*\circ\y^*\circ\y^*$ forms the best rank-one piezoelectric-type approximation of $\A$.
\end{Theorem}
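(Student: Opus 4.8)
The plan is to derive all three parts from a single variational principle: maximizing the continuous scalar function $f(\x,\y) = \x\A\y\y$ over the compact set $S = \{(\x,\y) : \x^\T\x = 1,\ \y^\T\y = 1\}$. For part (a), since $f$ is continuous and $S$ is compact, $f$ attains a maximum at some $(\x^*,\y^*)$, and I would apply the Lagrange multiplier conditions there. Introducing multipliers $\mu,\nu$ for the two normalization constraints, the stationarity condition in $\x$ gives $\A\y\y = \mu\x$ at once. The stationarity condition in $\y$ is the delicate step: because $\y$ appears twice in $f$, differentiating produces two terms, and here the partial symmetry $a_{ijk}=a_{ikj}$ is exactly what collapses them into $2\,\x\A\y = \nu\y$. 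Contracting the first equation with $\x^\T$ and the second with $\y^\T$, and using $\x^\T\x=\y^\T\y=1$, shows $\mu = \nu/2 = f(\x^*,\y^*)$; setting $\lambda = \mu$ then realizes \eqref{C-eig}. Thus the maximizer of $f$ yields a C-eigenvalue equal to the maximal value of $f$.

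For part (b), the identity $\lambda = \x\A\y\y$ follows immediately by left-multiplying $\A\y\y = \lambda\x$ by $\x^\T$ and using $\x^\T\x=1$. The three sign variants are verified by direct substitution into \eqref{C-eig}, using that $\A\y\y$ is quadratic (hence even) in $\y$ while $\x\A\y$ is linear (hence odd) in each of $\x$ and $\y$; each substitution preserves both equations and both normalizations.

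For part (c), I would first establish \eqref{maximum}. By part (b) every C-eigenvalue satisfies $\lambda = \x\A\y\y \le \max_S f$, while part (a) exhibits $\max_S f$ itself as a C-eigenvalue; hence $\lambda^* = \max_S f$. For the best rank-one approximation, I would expand $\|\A - \lambda\,\x\circ\y\circ\y\|_F^2 = \|\A\|_F^2 - 2\lambda f(\x,\y) + \lambda^2$, using $\langle\A,\x\circ\y\circ\y\rangle = f(\x,\y)$ and $\|\x\circ\y\circ\y\|_F^2 = (\x^\T\x)(\y^\T\y)^2 = 1$. Minimizing over $\lambda$ for fixed $(\x,\y)$ gives optimal $\lambda = f(\x,\y)$ and residual $\|\A\|_F^2 - f(\x,\y)^2$, so \eqref{rank-1} reduces to maximizing $f(\x,\y)^2$ over $S$. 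The connecting observation, supplied by the sign symmetry in part (b), is that $f(-\x,\y) = -f(\x,\y)$, so the range of $f$ is symmetric about zero and $\max_S f^2 = (\max_S f)^2 = (\lambda^*)^2$, attained at $(\x^*,\y^*)$. This identifies $\lambda^*\x^*\circ\y^*\circ\y^*$ as the minimizer.

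I expect the main obstacle to be the $\y$-derivative in part (a): correctly accounting for the double appearance of $\y$ and invoking the partial symmetry to produce the clean factor of two, and then reconciling the two Lagrange multipliers $\mu$ and $\nu$ into the single scalar $\lambda$. Once this is handled, the equality $\lambda = \x\A\y\y$ and the passage from \eqref{rank-1} to the maximization of $f^2$ are essentially bookkeeping built on the sign symmetry of part (b).
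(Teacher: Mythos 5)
Your proposal is correct and follows essentially the same route as the paper: the Lagrange-multiplier argument on the compact feasible set for (a), direct substitution for (b), and the expansion of $\|\A-\lambda\x\circ\y\circ\y\|_F^2$ reducing the best rank-one problem to maximizing $\x\A\y\y$ for (c). Your one refinement is worth keeping: you explicitly invoke the sign symmetry $f(-\x,\y)=-f(\x,\y)$ to justify that maximizing $f^2$ is the same as maximizing $f$, a step the paper's proof of (c) passes over silently.
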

\begin{proof}
  (a) We consider the following optimization problem
  \begin{equation}\label{opt-1}
    \max~\{\x\A\y\y : \x^\T\x=1, \y^\T\y=1\}.
  \end{equation}
  On one hand, since the objective function $\x\A\y\y$ is continuous in variables $\x$ and $\y$ and
  the feasible region $\{(\x,\y)\in\REAL^n\times\REAL^n:\x^\T\x=1,\y^\T\y=1\}$ is compact,
  there exist vectors $\x^*$ and $\y^*$ that solve \eqref{opt-1} with
  the maximal objective value $\lambda^*\equiv\x^*\A\y^*\y^*$.

  On the other hand, we write the Lagrangian of \eqref{opt-1}:
  \begin{equation}\label{opt-2}
    L(\x,\y,\mu_1,\mu_2) = -\x\A\y\y + \frac{\mu_1}{2}(\x^\T\x-1) + \mu_2(\y^\T\y-1).
  \end{equation}
  By the Lagrangian multiplier method, for the optimal solution $(\x^*,\y^*)$,
  there exist multipliers $\mu_1^*$ and $\mu_2^*$ such that
  \begin{equation*}
  \left\{\begin{array}{rcl}
    \frac{\partial L}{\partial \x^*} =&  -\A\y^*\y^* +  \mu_1^*\x^* &=0, \\
    \frac{\partial L}{\partial \y^*} =& -2\x^*\A\y^* + 2\mu_2^*\y^* &=0, \\
    \frac{\partial L}{\partial \mu_1^*} =& \frac{1}{2}({\x^*}^\T\x^*-1) &=0, \\
    \frac{\partial L}{\partial \mu_2^*} =& {\y^*}^\T\y^*-1 &=0.
  \end{array}\right.
  \end{equation*}
  By ${\x^*}^\T\x^*={\y^*}^\T\y^*=1$, we have $\mu_1^*=\mu_2^*=\x^*\A\y^*\y^*=\lambda^*$.
  Hence, $\lambda^*$, $\x^*$ and $\y^*$ satisfy \eqref{C-eig} and hence
  a C-eigenvalue and its associated left and right C-eigenvectors of $\A$.
  This proves the existence.

  (b) It is straightforward to verify the assertion (b).

  (c) By some calculations, we find
  \begin{equation*}
    \|\A-\lambda\x\circ\y\circ\y\|_F^2
      = \|A\|_F^2 - 2\lambda\langle\A,\x\circ\y\circ\y\rangle + \lambda^2\|\x\|^2\|\y\|^4.
  \end{equation*}
  Minimizing this square-cost with respect to $\lambda$, we get $\lambda=\langle\A,\x\circ\y\circ\y\rangle$
  because of $\|\x\|=\|\y\|=1$. By substituting $\lambda=\langle\A,\x\circ\y\circ\y\rangle$
  to the square-cost, we have
  \begin{equation*}
    \|\A-\lambda\x\circ\y\circ\y\|_F^2
      = \|\A\|^2 - \langle\A,\x\circ\y\circ\y\rangle^2.
  \end{equation*}
  Hence, there is a dual problem of \eqref{rank-1} \cite{ZG-01}:
  \begin{equation*}
    \max~\{\langle\A,\x\circ\y\circ\y\rangle : \x^\T\x =1, \y^\T\y =1 \}.
  \end{equation*}
  Then, there exist vectors $\x^*$ and $\y^*$ such that
  \begin{eqnarray*}
    \langle\A,\x^*\circ\y^*\circ\y^*\rangle &=& \max~\{\langle\A,\x\circ\y\circ\y\rangle : \x^\T\x =1, \y^\T\y =1\}
  \end{eqnarray*}
  because a piezoelectric-type tensor $\A$ is partially symmetric with respect to the later two indices.
  This yields \eqref{rank-1}.
  Here, $\langle\A,\x^*\circ\y^*\circ\y^*\rangle = \x^*\A\y^*\y^*=\lambda^*$ is the largest C-eigenvalue of $\A$,
  $\x^*$ and $\y^*$ are its associated left and right C-eigenvectors respectively.
\end{proof}

By Theorem 2.3 (c), $\lambda^*\x^*\circ\y^*\circ\y^*$ forms the best rank-one
piezoelectric type approximation of $\A$. This implies that for the piezoelectric tensor $\A$, its largest C-eigenvalue
determines the highest piezoelectric coupling constant, and
$\y^*$ is the corresponding direction of the stress where this appears.   Thus, the largest C-eigenvalue of the piezoelectric tensor has concrete physical meaning.  In the next section, we will further
discuss its meanings.

\medskip

Next, we show that C-eigenvalues of a piezoelectric-type tensor $A=[a_{ijk}]$ are
invariant under orthogonal transformations.
Let $Q=[q_{ir}]\in\REAL^{n\times n}$ be an orthogonal matrix.
We define a new tensor $\A Q^3\in\REAL^{n\times n\times n}$ in which
elements are
\begin{equation*}
    [\A Q^3]_{rst} = \sum_{i,j,k}a_{ijk}q_{ir}q_{js}q_{kt}
\end{equation*}
for $r,s,t=1,2,\ldots,n$. Obviously, $\A Q^3$ is also a piezoelectric-type tensor.

\begin{Theorem}
  Suppose that $Q\in\REAL^{n\times n}$ is an orthogonal matrix.
  Let $\lambda$, $\x$ and $\y$ be a C-eigenvalue and its associated C-eigenvectors of a piezoelectric-type tensor $\A\in\REAL^{n\times n\times n}$.
  Then, $\lambda$, $Q^\T\x$ and $Q^\T\y$ are a C-eigenvalue and its associated C-eigenvectors of $\A Q^3$.
\end{Theorem}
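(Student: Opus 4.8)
The plan is to verify directly that the quadruple $(\lambda, Q^\T\x, Q^\T\y)$ satisfies the four defining relations \eqref{C-eig} of Definition~2.2 for the transformed tensor $\A Q^3$. Writing $\tilde\x = Q^\T\x$ and $\tilde\y = Q^\T\y$ for brevity, the two normalization conditions are immediate: since $Q$ is orthogonal, $\tilde\x^\T\tilde\x = \x^\T Q Q^\T\x = \x^\T\x = 1$, and likewise $\tilde\y^\T\tilde\y = 1$. So the real content is the two eigenrelations $(\A Q^3)\tilde\y\tilde\y = \lambda\tilde\x$ and $\tilde\x(\A Q^3)\tilde\y = \lambda\tilde\y$.

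First I would establish the covariance identity $(\A Q^3)(Q^\T\y)(Q^\T\y) = Q^\T(\A\y\y)$. Expanding the $r$-th component using the definition $[\A Q^3]_{rst} = \sum_{i,j,k} a_{ijk} q_{ir} q_{js} q_{kt}$ together with $[\tilde\y]_s = \sum_j q_{js} y_j$, the sums over $s$ and $t$ produce the factors $\sum_s q_{js} q_{j's} = [QQ^\T]_{jj'} = \delta_{jj'}$ and $\sum_t q_{kt} q_{k't} = \delta_{kk'}$. These Kronecker deltas collapse the auxiliary indices, leaving $\sum_i q_{ir}\sum_{j,k} a_{ijk} y_j y_k = \sum_i q_{ir}[\A\y\y]_i = [Q^\T(\A\y\y)]_r$. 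Substituting $\A\y\y = \lambda\x$ then gives $(\A Q^3)\tilde\y\tilde\y = \lambda Q^\T\x = \lambda\tilde\x$, which is the first eigenrelation.

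The second eigenrelation is handled symmetrically. An analogous expansion of $\tilde\x(\A Q^3)\tilde\y$, now contracting the first slot against $[\tilde\x]_r = \sum_i q_{ir} x_i$ and the middle slot against $[\tilde\y]_s$, produces the deltas $\sum_r q_{ir} q_{i'r} = \delta_{ii'}$ and $\sum_s q_{js}q_{j's} = \delta_{jj'}$ (both instances of $QQ^\T = I$), and yields $\tilde\x(\A Q^3)\tilde\y = Q^\T(\x\A\y)$. Invoking $\x\A\y = \lambda\y$ gives $\lambda Q^\T\y = \lambda\tilde\y$, as required. Together with the two normalizations, all four relations of \eqref{C-eig} hold, so $(\lambda, Q^\T\x, Q^\T\y)$ is a C-eigenvalue of $\A Q^3$ with associated left and right C-eigenvectors $Q^\T\x$ and $Q^\T\y$.

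The argument is entirely mechanical; the only point demanding care—what I would call the main obstacle—is the index bookkeeping, namely ensuring that the correct pair of $Q$-factors in $[\A Q^3]_{rst}$ is contracted with the matching $Q$-factor coming from $Q^\T\x$ or $Q^\T\y$, so that orthogonality collapses each sum to $\delta$ via $QQ^\T = I$ rather than leaving an uncontracted product. The partial symmetry $a_{ijk}=a_{ikj}$ guarantees that the two $\tilde\y$-contractions in the first relation are interchangeable, so no ambiguity arises there; the whole result is thus best phrased as the covariance of the two products $\A\y\y$ and $\x\A\y$ under the substitution $\A\mapsto\A Q^3$, $\x\mapsto Q^\T\x$, $\y\mapsto Q^\T\y$.
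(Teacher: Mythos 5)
Your proposal is correct and follows essentially the same route as the paper's proof: expand the $r$-th component of $(\A Q^3)(Q^\T\y)(Q^\T\y)$, collapse the auxiliary sums via $QQ^\T=I$ to obtain $Q^\T(\A\y\y)$, treat $\x(\A Q^3)\y$ symmetrically, and note that orthogonality preserves the unit-norm constraints. No gaps.
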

\begin{proof}
  We look at the $r$th component of a vector $(\A Q^3)(Q^\T\y)(Q^\T\y)$:
  \begin{eqnarray*}
    \lefteqn{[(\A Q^3)(Q^\T\y)(Q^\T\y)]_r = \sum_{s,t}[\A Q^3]_{rst}[Q^\T\y]_s[Q^\T\y]_t} \\
      &=& \sum_{s,t}\left(\sum_{i,j,k}a_{ijk}q_{ir}q_{js}q_{kt}\right)\left(\sum_{\hat{j}}q_{\hat{j}s}y_{\hat{j}}\right)\left(\sum_{\hat{k}}q_{\hat{k}t}y_{\hat{k}}\right) \\
      &=& \sum_{i}q_{ir}\left(\sum_{j,k}a_{ijk}\left(\sum_{s,\hat{j}}q_{js}q_{\hat{j}s}y_{\hat{j}}\right)\left(\sum_{t,\hat{k}}q_{kt}q_{\hat{k}t}y_{\hat{k}}\right)\right) \\
      &=& \sum_{i}q_{ir}\left(\sum_{j,k}a_{ijk}[QQ^\T\y]_j[QQ^\T\y]_k\right) \\
      &=& \sum_{i}q_{ir}\left(\sum_{j,k}a_{ijk}y_jy_k\right) \\
      &=& \sum_iq_{ir}[\A\y\y]_i,
  \end{eqnarray*}
  where the orthogonal matrix $Q$ satisfies $QQ^\T=I$.
  Then, we get $(\A Q^3)(Q^\T\y)(Q^\T\y) = Q^\T(\A\y\y)$. From $\A\y\y=\lambda\x$,
  we immediately get an equation $$(\A Q^3)(Q^\T\y)(Q^\T\y) = \lambda Q^\T\x.$$
  Similarly, we have $$(Q^\T\x)(\A Q^3)(Q^\T\y) = \lambda Q^\T\y.$$
  In addition, $$(Q^\T\x)^\T (Q^\T\x) = (Q^\T\y)^\T (Q^\T\y) =1.$$
  Hence, $\lambda$ is a C-eigenvalue of $\A Q^3$, $Q^\T\x$ and $Q^\T\y$ are its associated C-eigenvectors.
\end{proof}

\section{Applications in Piezoelectric Effect and Converse Piezoelectric Effect}

In the last section, we showed that for the piezoelectric tensor, the largest C-eigenvalue of $\A$
determines the highest piezoelectric coupling constant, and
$\y^*$ is the corresponding direction of the stress where this appears.   We now further
discuss its physical meanings.

For non-centrosymmetric materials, the linear piezoelectric equation is expressed as
\begin{equation*}
    P_i = \sum_{j,k}a_{ijk}T_{jk},
\end{equation*}
where $\A=[a_{ijk}]\in\REAL^{3\times3\times3}$ is a piezoelectric tensor,
$T\in\REAL^{3\times3}$ is the stress tensor,
and $P$ is the electric change density displacement (polarization).
Since $\A$ is a piezoelectric-type tensor,
the last two indices of $\A$ is symmetric,
i.e., $a_{ijk}=a_{ikj}$ for all $j$ and $k$. Hence, there are $18$
independent elements in $\A$.

What situations trigger the extreme piezoelectricity under unit uniaxial stress?
An example of uniaxial stress is the stress in a long, vertical rod loaded by hanging a weight on the end \cite[Page 90]{Ny-85}.
In this case, the stress tensor could be rewritten as $T=\y\y^{\T}$ with $\y^\top \y =1$.
Then, we consider the following problem
\begin{equation}\label{lonEff-1}
\left\{\begin{aligned}
    \max~~ & \|P\|_2 \\
    \st~~~ & P = \A\y\y, \\
           & \y^\T\y=1.
\end{aligned}\right.
\end{equation}
Using a dual norm, we have $\|P\|_2=\max_{\x^\T\x=1} \x^\T P=\max_{\x^\T\x=1}\x\A\y\y$.
Hence, it suffices to consider the optimization problem
\begin{equation}\label{piezo-opt}
    \max ~ \x\A\y\y \qquad \st~~\x^\T \x=1,~\y^\T\y=1.
\end{equation}
We denote $(\x^*,\y^*)$ as the optimal solution of the above optimization problem.
Then, $\lambda^*=\x^*\A\y^*\y^*$ is the largest C-eigenvalue of the piezoelectric tensor $\A$,
and $\y^*$ is the unit uniaxial direction that the extreme piezoelectric effect along took place.
Then we have the following theorem.

\begin{Theorem}
  Suppose that $\lambda^*$, $\x^*$ and $\y^*$ are the largest C-eigenvalue and
   its associated C-eigenvectors of the piezoelectric tensor $\A$.
  Then, $\lambda^*$ is the maximum value of the $2$-norm of the electric polarization
  under a unit uniaxial stress along direction $\y^*$.
\end{Theorem}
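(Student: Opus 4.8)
The plan is to reduce the constrained maximization of $\|P\|_2$ to the variational characterization of the largest C-eigenvalue already established in Theorem~\ref{Th-2.3}(c). First I would record that a unit uniaxial stress along a direction $\y$ with $\y^\T\y=1$ corresponds to the stress tensor $T=\y\y^\T$, so that the piezoelectric equation $P_i=\sum_{j,k}a_{ijk}T_{jk}$ collapses to $P=\A\y\y$. Consequently the quantity to be maximized is exactly the objective of problem \eqref{lonEff-1}, namely $\max\{\|\A\y\y\|_2:\y^\T\y=1\}$, and the goal becomes to show this optimal value equals $\lambda^*$.

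The central step is to linearize the Euclidean norm through its dual representation $\|P\|_2=\max_{\x^\T\x=1}\x^\T P$. Substituting $P=\A\y\y$ and noting $\x^\T(\A\y\y)=\x\A\y\y$, I would rewrite the inner maximum over $\x$ and then fold it together with the outer maximum over $\y$ to obtain the single joint problem
\begin{equation*}
    \max~\{\x\A\y\y:\x^\T\x=1,~\y^\T\y=1\},
\end{equation*}
which is precisely \eqref{piezo-opt}. The legitimacy of combining the two maximizations follows from compactness of the product of unit spheres together with continuity of $\x\A\y\y$, so the supremum is attained and the nested maxima coincide with the joint maximum.

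Finally I would invoke Theorem~\ref{Th-2.3}(c), which identifies the optimal value of \eqref{piezo-opt} with the largest C-eigenvalue $\lambda^*$ and the maximizer with the associated left and right C-eigenvectors $\x^*$ and $\y^*$. This yields $\max\{\|P\|_2:P=\A\y\y,~\y^\T\y=1\}=\lambda^*$, attained at the uniaxial direction $\y^*$, which is the assertion of the theorem.

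The step I expect to demand the most care is the dual-norm reduction: one must confirm that the inner maximization over $\x$ genuinely reproduces $\|\A\y\y\|_2$ (which holds because the maximizing $\x$ aligns with $\A\y\y$, so the inner value is the norm itself), and that the resulting joint optimum is \emph{attained} at the C-eigenvector pair $(\x^*,\y^*)$ rather than merely bounded by $\lambda^*$. Both points are secured by Theorem~\ref{Th-2.3}, and everything else is bookkeeping already prepared in the paragraph preceding the statement.
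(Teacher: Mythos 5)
Your proposal is correct and follows essentially the same route as the paper: the text preceding the theorem models unit uniaxial stress as $T=\y\y^\T$ so that $P=\A\y\y$, linearizes $\|P\|_2$ via the dual representation $\max_{\x^\T\x=1}\x^\T P$, collapses the nested maximizations into the joint problem $\max\{\x\A\y\y:\x^\T\x=1,\ \y^\T\y=1\}$, and identifies its value with $\lambda^*$ by Theorem~\ref{Th-2.3}(c). Your added remarks on attainment and on why the inner maximum equals the norm are sound and merely make explicit what the paper leaves implicit.
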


\medskip

The linear equation for the converse piezoelectric effect is
\begin{equation*}
    S_{jk}=\sum_{i}a_{ijk}E_i,
\end{equation*}
where $S$ is the strain tensor and $E$ is the electric field strength.
Let $\|\cdot\|_2$ be the matrix spectral norm, i.e.,
$\|S\|_2=\max_{\y^{\T}\y=1}\y^\T S\y.$
Now, we maximize the spectral norm of $S$:
\begin{equation}\label{conPiezo}
\left\{\begin{aligned}
    \max~~ & \|S\|_2 \\
    \st~~~ & S_{jk}=\sum_{i=1}^3 E_ia_{ijk} \qquad\forall j,k\in\{1,2,3\},\\
           & \|E\| = 1.
\end{aligned}\right.
\end{equation}
Since $\|S\|_2=\max_{\y^{\T}\y=1}\y^\T S\y = \max_{\y^{\T}\y=1} E\A\y\y$,
we rewrite \eqref{conPiezo} as follows
$$ \max~\{E\A\y\y : E^\T E=1, \y^\T\y=1\}. $$
We denote $(E^*,\y^*)$ as the optimal solution of the above optimization problem.
Then, $\lambda^*=E^*\A\y^*\y^*$ is the largest C-eigenvalue of $\A$,
$E^*$ and $\y^*$ are its associated left and right C-eigenvectors.

\begin{Theorem}
Suppose that $\lambda^*$, $\x^*$ and $\y^*$ are the largest C-eigenvalue and
its associated C-eigenvectors of the piezoelectric tensor $\A$.
  Then, $\lambda^*$ is the largest spectral norm of a strain tensor generated by
  the converse piezoelectric effect under unit electric field strength $\|\x^*\|=1$.
\end{Theorem}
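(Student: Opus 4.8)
The plan is to reduce the claimed maximization in \eqref{conPiezo} to the variational characterization of the largest C-eigenvalue already established in Theorem~\ref{Th-2.3}(c), so that almost all of the work is unwinding the definitions that precede the statement.

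First I would unfold the definition of the spectral norm given just above the theorem. Because $a_{ijk}=a_{ikj}$, the strain tensor $S$ produced by the converse piezoelectric relation $S_{jk}=\sum_i a_{ijk}E_i$ is symmetric, so $\|S\|_2=\max_{\y^\T\y=1}\y^\T S\y$ is a genuine Rayleigh quotient. Substituting the relation for $S_{jk}$ and collapsing the three sums gives
$$
  \y^\T S\y=\sum_{j,k}y_j\Big(\sum_i E_i a_{ijk}\Big)y_k=\sum_{i,j,k}a_{ijk}E_iy_jy_k=E\A\y\y,
$$
in the product notation of Section~2, with the electric field $E$ playing the role of the left variable $\x$.

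Next I would merge the inner maximization over $\y$ with the outer maximization over the unit field $E$, turning \eqref{conPiezo} into $\max\{E\A\y\y:E^\T E=1,\ \y^\T\y=1\}$, which is exactly problem \eqref{piezo-opt} after renaming $\x$ to $E$. By \eqref{maximum} the optimal value of this problem is the largest C-eigenvalue $\lambda^*$, attained at $E=\x^*$ and $\y=\y^*$. Hence $\lambda^*$ is the largest spectral norm attainable, realized under the unit electric field $\x^*$, which is the assertion.

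The one point that needs care, and the only step that is more than bookkeeping, is reconciling the paper's definition $\|S\|_2=\max_{\y^\T\y=1}\y^\T S\y$, which returns the largest \emph{eigenvalue} of $S$, with the genuine spectral norm $\max_i|\lambda_i(S)|$. Since $S$ depends linearly on $E$ and $(-\lambda^*,-\x^*,\y^*)$ is also a C-eigentriple by Theorem~\ref{Th-2.3}(b), replacing $E$ by $-E$ turns the most negative eigenvalue of $S$ into the largest positive one; the outer maximization over the unit sphere of $E$ therefore recovers the same nonnegative value $\lambda^*$ under either reading, so the distinction is immaterial to the conclusion. I expect this sign reconciliation to be the subtle part, while the remainder mirrors the proof of the preceding theorem for the direct piezoelectric effect.
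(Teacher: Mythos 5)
Your argument coincides with the paper's own (which appears as the discussion immediately preceding the theorem rather than in a proof environment): identify $\|S\|_2$ with $\max_{\y^\T\y=1}\y^\T S\y = E\A\y\y$ using the partial symmetry $a_{ijk}=a_{ikj}$, merge the two maximizations into $\max\{E\A\y\y : E^\T E=1,\ \y^\T\y=1\}$, and invoke Theorem~\ref{Th-2.3}(c). Your closing remark reconciling the paper's eigenvalue-based definition of $\|S\|_2$ with the genuine spectral norm $\max_i|\lambda_i(S)|$ via the sign symmetry of Theorem~\ref{Th-2.3}(b) is a point the paper silently glosses over, and is a worthwhile addition.
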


\section{C-Eigenvalues for Piezoelectric Tensors}

Owing to the crystallographic symmetry of materials, there are $32$ classes in crystals \cite{Ha-07}.
However, for $11$ classes in crystals possing the center of symmetry, piezoelectricity vanishes.
For the class $432$, piezoelectric changes cancel each other. Hence,
piezoelectricity may exist in the remaining $20$ crystallographic classes.
We examine some typical crystals in these classes in this section.

By Theorem \ref{Th-2.3} (b), we know that $(-\lambda,-\x,\y)$, $(\lambda,\x,-\y)$,
and $(-\lambda,-\x,-\y)$ are C-eigenvalues and associated C-eigenvectors of a piezoelectric tensor
if $(\lambda,\x,\y)$ are a C-eigenvalue and associated C-eigenvectors of the piezoelectric tensor.
In this section, we use $(\lambda,\x,\y)$ to present a group of these four C-eigenvalues and
associated C-eigenvectors of the piezoelectric tensor for compactness.

Piezoelectric tensors of crystals have special structures owing to the symmetry.
First, we consider crystals in $23$ and $\bar{4}3m$ crystallographic point groups.
There is only one independent parameter in the corresponding piezoelectric tensor $\A(\alpha)$:
\begin{equation*}
    a_{123}=a_{213}=a_{312}=-\alpha,
\end{equation*}
where $\alpha\neq 0$. Other elements of $\A(\alpha)$ are zeros.
Then, we have the following proposition.

\begin{Proposition}\label{Pr-4.1}
  There are $13$ groups of C-eigenvalues and associated C-eigenvectors of $\A(\alpha)$.
\end{Proposition}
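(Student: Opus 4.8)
The plan is to reduce the eigenvalue system \eqref{C-eig} to fully explicit scalar equations and then enumerate all solutions. First I would exploit the one-parameter structure $a_{123}=a_{213}=a_{312}=-\alpha$ together with the partial symmetry $a_{ijk}=a_{ikj}$ to evaluate the three products appearing in \eqref{C-eig}. A direct computation gives $\A(\alpha)\y\y=-2\alpha(y_2y_3,\,y_1y_3,\,y_1y_2)^\T$, $\x\A(\alpha)\y=-\alpha(x_2y_3+x_3y_2,\,x_1y_3+x_3y_1,\,x_1y_2+x_2y_1)^\T$, and $\x\A(\alpha)\y\y=-2\alpha(x_1y_2y_3+x_2y_1y_3+x_3y_1y_2)$. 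Writing $\beta=-2\alpha\neq0$, the system \eqref{C-eig} becomes $\beta y_2y_3=\lambda x_1$, $\beta y_1y_3=\lambda x_2$, $\beta y_1y_2=\lambda x_3$, $\tfrac{\beta}{2}(x_2y_3+x_3y_2)=\lambda y_1$, $\tfrac{\beta}{2}(x_1y_3+x_3y_1)=\lambda y_2$, $\tfrac{\beta}{2}(x_1y_2+x_2y_1)=\lambda y_3$, subject to $\x^\T\x=\y^\T\y=1$.

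Rather than count groups directly, I would count all solution triples $(\lambda,\x,\y)$ and divide by four. This is valid because the four triples in Theorem \ref{Th-2.3}(b) are always pairwise distinct: any coincidence would force $\y=-\y$, $\x=-\x$, or $\lambda=-\lambda$ with its partner vector self-negating, each contradicting $\x^\T\x=\y^\T\y=1$. Thus every group has exactly four members, and I expect the total solution count to be $52$, giving $52/4=13$.

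The enumeration splits on whether $\lambda=0$. If $\lambda=0$, the first three equations give $y_2y_3=y_1y_3=y_1y_2=0$, so $\y$ is a signed coordinate vector; feeding this into the last three equations forces $\x$ to be a signed coordinate vector along the same axis, for $4$ solutions per axis and $12$ in all. If $\lambda\neq0$, I would solve the first three equations for $x_1,x_2,x_3$ and substitute into the last three; each resulting identity factors, after using $\y^\T\y=1$, as $y_i\bigl[\tfrac{\beta^2}{2\lambda}(1-y_i^2)-\lambda\bigr]=0$, so for each index either $y_i=0$ or $y_i^2=1-2\lambda^2/\beta^2$ (the same value for all nonzero components). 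Splitting on the number of nonzero components of $\y$: three nonzero yields $y_i^2=\tfrac13$ and $\lambda=\pm\beta/\sqrt3$ ($16$ solutions); exactly two nonzero yields $y_i^2=\tfrac12$ and $\lambda=\pm\beta/2$ ($8$ per coordinate plane, $24$ total); exactly one nonzero forces $\lambda=0$, a contradiction. Adding gives $12+16+24=52$, hence $13$ groups.

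The main obstacle is bookkeeping rather than any single hard step. In the $\lambda\neq0$ case the substitution only yields necessary conditions, so for each candidate I must confirm that all six equations hold simultaneously (for a two-nonzero solution, for instance, the equation indexed by the vanishing component must still be checked) and that the normalization $\x^\T\x=1$ holds; the latter reduces to $y_2^2y_3^2+y_1^2y_3^2+y_1^2y_2^2=\lambda^2/\beta^2$, which I would verify is met by the values found. The delicate part is then ensuring completeness and no double counting: solutions across the three cases have distinct zero-patterns of $\y$ or distinct $\lambda$, so they cannot coincide, and the final count rests entirely on the arithmetic $12+16+24=52$ and the orbit-size-four claim, both of which I would check carefully.
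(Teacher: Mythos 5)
Your proposal is correct, and it rests on the same basic decomposition as the paper's proof: split on $\lambda=0$ versus $\lambda\neq 0$, and within the latter on the zero pattern of $\y$. The two genuine differences are in execution. First, you eliminate $\x$ uniformly via the first three equations and arrive at the factored condition $y_i\bigl[\tfrac{\beta^2}{2\lambda}(1-y_i^2)-\lambda\bigr]=0$, which handles the two-nonzero and three-nonzero cases by one formula; the paper instead argues each case ad hoc (introducing the ratio $t=x_1/x_2=y_2/y_1$ in the all-nonzero case, for instance). Second, you count all $52$ solution triples and divide by the orbit size $4$ guaranteed by Theorem \ref{Th-2.3}(b), whereas the paper exhibits the $13$ representatives explicitly. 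Your orbit-size argument is sound (a coincidence within an orbit would force $\x=-\x$ or $\y=-\y$, contradicting the unit-norm constraints), and your subcounts $12+24+16=52$ match the paper's $3+6+4=13$ groups. What the paper's route buys is the explicit list of eigenpairs, which is used later (e.g., to read off the largest C-eigenvalue $2\alpha/\sqrt{3}$ in Example 1); what your route buys is a cleaner completeness argument, since the factorization makes it transparent that no zero pattern of $\y$ has been overlooked. You correctly flag the one place where care is needed: the substitution gives only necessary conditions, so the unused equations (the one indexed by a vanishing component, and $\x^\T\x=1$, i.e.\ $y_2^2y_3^2+y_1^2y_3^2+y_1^2y_2^2=\lambda^2/\beta^2$) must be verified for each candidate; these checks do go through with the values you found.
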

\begin{proof}
  We solve the polynomial system \eqref{C-eig} for $\A(\alpha)$:
  \begin{subequations}
  \begin{align}
    -2\alpha y_2y_3 =& \lambda x_1, \label{9a}\\
    -2\alpha y_1y_3 =& \lambda x_2, \label{9b}\\
    -2\alpha y_1y_2 =& \lambda x_3, \label{9c}\\
    -\alpha x_2y_3 - \alpha x_3y_2 =& \lambda y_1, \label{9d}\\
    -\alpha x_1y_3 - \alpha x_3y_1 =& \lambda y_2, \label{9e}\\
    -\alpha x_1y_2 - \alpha x_2y_1 =& \lambda y_3, \label{9f}\\
    x_1^2 + x_2^2 + x_3^2 =& 1, \label{9g}\\
    y_1^2 + y_2^2 + y_3^2 =& 1. \label{9h}
  \end{align}
  \end{subequations}
  If $\lambda=0$, we know two of $y_1,y_2,y_3$ are zeros from \eqref{9a}-\eqref{9c} and \eqref{9h}.
  Assume $\y=(\pm 1,0,0)^\T$. By \eqref{9e}-\eqref{9f}, we have $x_2=x_3=0$.
  In addition, $x_1=\pm 1$ by \eqref{9g}. Hence, we get the solution
  \begin{equation*}
    \lambda_1^* = 0, \quad \x_1^* = (1,0,0)^\T, \quad\y_1^*=(1,0,0)^\T.
  \end{equation*}
  Similarly, we have
  \begin{eqnarray*}
    \lambda_2^* = 0, \quad& \x_2^* = (0,1,0)^\T, &\quad \y_2^*=(0,1,0)^\T, \\
    \lambda_3^* = 0, \quad& \x_3^* = (0,0,1)^\T, &\quad \y_3^*=(0,0,1)^\T.
  \end{eqnarray*}

  Next, we consider the case that $\lambda\neq 0$ and $y_3=0$.
  By \eqref{9a}-\eqref{9b}, we get $x_1=x_2=0$.
  In addition, by \eqref{9g}, we obtain $x_3=\pm 1$.
  From \eqref{9d}-\eqref{9e}, we know $\mp \alpha y_2=\lambda y_1$ and $\mp \alpha y_1=\lambda y_2$.
  Then, $\lambda y_1y_2 = \mp \alpha y_2^2 = \mp \alpha y_1^2$. Hence, $y_1^2=y_2^2=\frac{1}{2}$ by \eqref{9h}.
  In short, we have two solutions
  \begin{equation*}
    \lambda_4^* = \alpha, \quad \x_4^* = (0,0,-1)^\T,
    \quad \y_4^*=\left(\frac{1}{\sqrt{2}},\frac{1}{\sqrt{2}},0\right)^\T,
  \end{equation*}
  and
  \begin{equation*}
    \lambda_5^* = \alpha, \quad \x_5^* = (0,0,1)^\T,
    \quad \y_5^*=\left(\frac{1}{\sqrt{2}},-\frac{1}{\sqrt{2}},0\right)^\T.
  \end{equation*}
  Using a similar approach, we obtain four more solutions for the case of $\lambda\neq 0$ and $y_2=0$
  and the case of $\lambda\neq 0$ and $y_1=0$:
  \begin{eqnarray*}
    \lambda_6^* = \alpha, \quad& \x_6^* = (0,-1,0)^\T, &
    \quad \y_6^*=\left(\frac{1}{\sqrt{2}},0,\frac{1}{\sqrt{2}}\right)^\T, \\
    \lambda_7^* = \alpha, \quad& \x_7^* = (0,1,0)^\T, &
    \quad \y_7^*=\left(\frac{1}{\sqrt{2}},0,-\frac{1}{\sqrt{2}}\right)^\T, \\
    \lambda_8^* = \alpha, \quad& \x_8^* = (-1,0,0)^\T, &
    \quad \y_8^*=\left(0,\frac{1}{\sqrt{2}},\frac{1}{\sqrt{2}}\right)^\T, \\
    \lambda_9^* = \alpha, \quad& \x_9^* = (1,0,0)^\T, &
    \quad \y_9^*=\left(0,\frac{1}{\sqrt{2}},-\frac{1}{\sqrt{2}}\right)^\T.
  \end{eqnarray*}

  Finally, we consider the case that $y_1\neq 0$, $y_2\neq 0$, $y_3\neq 0$, and $\lambda\neq 0$.
  From \eqref{9a}-\eqref{9c}, we know $x_1\neq 0$, $x_2\neq 0$, and $x_3\neq 0$.
  By \eqref{9a}-\eqref{9b}, we have $$\frac{x_1}{x_2}=\frac{y_2}{y_1}\equiv t\neq0.$$
  Then, by multiplying $t$ to both sides of \eqref{9d}, we get
  $ -\alpha x_1 y_3 - \alpha x_3y_1 t^2 = \lambda y_2.$
  Combining this equation and \eqref{9e}, we get $-\alpha x_3y_1(t^2-1)=0$.
  Hence,  $$t^2=1, \quad x_1^2=x_2^2, \quad\text{ and }\quad y_1^2=y_2^2. $$
  Similarly, we have $x_1^2=x_3^2$ and $y_1^2=y_3^2$.
  By \eqref{9g} and $\eqref{9h}$, we have $x_1^2=x_2^2=x_3^2=\frac{1}{3}$
  and $y_1^2=y_2^2=y_3^2=\frac{1}{3}$. In a word, we obtain four more solutions
  \begin{eqnarray*}
    \lambda_{10}^* = \frac{2\alpha}{\sqrt{3}},
    \quad& \x_{10}^* = \left(-\frac{1}{\sqrt{3}},-\frac{1}{\sqrt{3}},-\frac{1}{\sqrt{3}}\right)^\T, &
    \quad \y_{10}^*=\left(\frac{1}{\sqrt{3}},\frac{1}{\sqrt{3}},\frac{1}{\sqrt{3}}\right)^\T, \\
        \lambda_{11}^* = \frac{2\alpha}{\sqrt{3}},
    \quad& \x_{11}^* = \left(-\frac{1}{\sqrt{3}},\frac{1}{\sqrt{3}},-\frac{1}{\sqrt{3}}\right)^\T, &
    \quad \y_{11}^*=\left(\frac{1}{\sqrt{3}},-\frac{1}{\sqrt{3}},\frac{1}{\sqrt{3}}\right)^\T, \\
        \lambda_{12}^* = \frac{2\alpha}{\sqrt{3}},
    \quad& \x_{12}^* = \left(-\frac{1}{\sqrt{3}},-\frac{1}{\sqrt{3}},\frac{1}{\sqrt{3}}\right)^\T, &
    \quad \y_{12}^*=\left(\frac{1}{\sqrt{3}},\frac{1}{\sqrt{3}},-\frac{1}{\sqrt{3}}\right)^\T, \\
        \lambda_{13}^* = \frac{2\alpha}{\sqrt{3}},
    \quad& \x_{13}^* = \left(-\frac{1}{\sqrt{3}},\frac{1}{\sqrt{3}},\frac{1}{\sqrt{3}}\right)^\T, &
    \quad \y_{13}^*=\left(\frac{1}{\sqrt{3}},-\frac{1}{\sqrt{3}},-\frac{1}{\sqrt{3}}\right)^\T.
  \end{eqnarray*}
  The proof is complete.
\end{proof}

Now, we are presenting C-eigenvalues and associated C-eigenvectors of
piezoelectric tensors arising from known piezoelectric materials with different symmetries.
Here, coefficients of piezoelectric tensors are measured in \cite{dCGAP-15}
with unit {(pC/N)} that is omitted for convenience.
We note that ``p'' means pico ($10^{-12}$), ``C'' stands for coulomb (electric),
and ``N'' is newton (force).

\medskip

\textbf{Example 1.}
The compound \ce{VFeSb} belongs to the $\bar{4}3m$ crystallographic point group \cite{dCGAP-15}.
Nonzero coefficients of the piezoelectric tensor $\A_{VFeSb}$ are
\begin{equation*}
    a_{123}=a_{213}=a_{312}=-3.68180667.
\end{equation*}
By Proposition \ref{Pr-4.1}, we find that the largest C-eigenvalue of $\A_{VFeSb}$
is about $4.25138$.

\medskip

In remainder examples, the polynomial system \eqref{C-eig} of C-eigenvalues are solved by
the function \texttt{NSolve} in Mathematica.

\medskip

\textbf{Example 2.}
The piezoelectricity of $\alpha$-quartz (\ce{SiO2}) crystal were discovered by
Curie brothers in 1880 \cite{CC-80}.
The $\alpha$-quartz belongs to the 32 crystallographic point group \cite{Ha-07}.
Hence, there are two independent parameters in the piezoelectric tensor $\A_{SiO2}$:
$$ a_{111}=-a_{122}=-a_{212}=-0.13685  \quad\text{ and }
\quad a_{123}=-a_{213}=-0.009715.$$
Other elements of the piezoelectric tensor are zeros.
Positive C-eigenvalue of $\A_{SiO2}$
and associated C-eigenvectors are reported in Table \ref{C-eigen-SiO2}.

\begin{table}[tbph]
  \begin{center}
  \caption{Positive C-eigenvalues of a piezoelectric tensor of $\alpha$-quartz.}\label{C-eigen-SiO2}
  \begin{tabular}{c|c|rrr|rrr}
    \hline
    No. & $\lambda$ & \multicolumn{3}{c|}{$\x^\T$} & \multicolumn{3}{c}{$\y^\T$}  \\ \hline
      1 &0.137536&1.0&0.0&0.0&0.0&0.997515&-0.0704604 \\
      2 &0.137536&-0.5&0.866025&0.0&0.863873&0.498757&0.0704604 \\
      3 &0.137536&-0.5&-0.866025&0.0&0.863873&-0.498757&-0.0704604 \\
      4 &0.13685&-1.0&0.0&0.0&1.0&0.0&0.0 \\
      5 &0.13685&0.5&0.866025&0.0&0.5&0.866025&0.0 \\
      6 &0.13685&0.5&-0.866025&0.0&0.5&-0.866025&0.0 \\
      7 &0.000686228&-1.0&0.0&0.0&0.0&0.0704604&0.997515 \\
      8 &0.000686228&0.5&-0.866025&0.0&0.0610205&0.0352302&-0.997515 \\
      9 &0.000686228&0.5&0.866025&0.0&0.0610205&-0.0352302&0.997515 \\
    \hline
  \end{tabular}
  \end{center}
\end{table}

\textbf{Example 3.}
The compound \ce{Cr2AgBiO8} belongs to the $\bar{4}$ crystallographic point group \cite{dCGAP-15}.
There are four independent parameters in the piezoelectric tensor $\A_{Cr2AgBiO8}$
\begin{eqnarray*}
  & a_{123}=a_{213}=-0.22163, \quad a_{113}=-a_{223}=2.608665, & \\
    & a_{311}=-a_{322}=0.152485, \quad\text{ and }\quad a_{312}=-0.37153. &
\end{eqnarray*}
Other elements of the piezoelectric tensor are zeros.
Positive C-eigenvalue of $\A_{Cr2AgBiO8}$
and associated C-eigenvectors are reported in Table \ref{C-eigen-Cr2AgBiO8}.

\begin{table}[tbph]
  \begin{center}
  \caption{Positive C-eigenvalues of the piezoelectric tensor of \ce{Cr2AgBiO8}.}\label{C-eigen-Cr2AgBiO8}
  \begin{tabular}{c|c|rrr|rrr}
    \hline
    No. & $\lambda$ & \multicolumn{3}{c|}{$\x^\T$} & \multicolumn{3}{c}{$\y^\T$}  \\ \hline
      1 & 2.6258 & 0.872141&0.48317&0.0769254 & 0.589036&-0.394962&0.705012 \\
      2 & 2.6258 & -0.872141&-0.48317&0.0769254 & 0.589036&-0.394962&-0.705012 \\
      3 & 2.6258 & 0.48317&-0.872141&-0.0769254 & 0.394962&0.589036&0.705012 \\
      4 & 2.6258 & -0.48317&0.872141&-0.0769254 & 0.394962&0.589036&-0.705012 \\
      5 & 2.61806 & 0.961197&-0.275862&0.0 & 0.693742&0.136827&0.707107 \\
      6 & 2.61806 & -0.961197&0.275862&0.0 & 0.693742&0.136827&-0.707107 \\
      7 & 2.61806 & 0.275862&0.961197&0.0 & 0.136827&-0.693742&0.707107 \\
      8 & 2.61806 & -0.275862&-0.961197&0.0 & 0.136827&-0.693742&-0.707107 \\
      9 & 0.401605 & 0.0&0.0&1.0 & 0.830569&-0.556916&0.0 \\
     10 & 0.401605 & 0.0&0.0&-1.0 & 0.556916&0.830569&0.0 \\
    \hline
  \end{tabular}
  \end{center}
\end{table}

\textbf{Example 4.}
The compound \ce{RbTaO3} belongs to the $3m$ crystallographic point group \cite{dCGAP-15}.
There are four independent parameters in the piezoelectric tensor $\A_{RbTaO3}$
\begin{eqnarray*}
    & a_{113}=a_{223}=-8.40955, \quad a_{222}=-a_{212}=-a_{211}=-5.412525 & \\
    & a_{311}=a_{322}=-4.3031, \quad\text{ and }\quad a_{333}=-5.14766. &
\end{eqnarray*}
Other elements of the piezoelectric tensor are zeros.
Positive C-eigenvalue of $\A_{RbTaO3}$
and associated C-eigenvectors are reported in Table \ref{C-eigen-RbTaO3}.

\begin{table}[!tbph]
  \begin{center}
  \caption{Positive C-eigenvalues of the piezoelectric tensor of \ce{RbTaO3}.}\label{C-eigen-RbTaO3}
  \begin{tabular}{c|c|rrr|rrr}
    \hline
    No. & $\lambda$ & \multicolumn{3}{c|}{$\x^\T$} & \multicolumn{3}{c}{$\y^\T$}  \\ \hline
      1 & 12.4234 & 0.804378&0.464408&-0.370541 & 0.695227&0.401389&-0.596277 \\
      2 & 12.4234 & -0.804378&0.464408&-0.370541 & 0.695227&-0.401389&0.596277 \\
      3 & 12.4234 & 0.0&-0.928816&-0.370541 & 0.0&0.802779&0.596277 \\
      4 & 7.82245 & 0.677808&-0.391333&-0.622442 & 0.49743&-0.287191&-0.818587 \\
      5 & 7.82245 & -0.677808&-0.391333&-0.622442 & 0.49743&0.287191&0.818587 \\
      6 & 7.82245 & 0.0&0.782666&-0.622442 & 0.0&0.574382&-0.818587 \\
      7 & 6.91463 & 0.677894&-0.391382&-0.622318 & 0.5&0.866025&0.0 \\
      8 & 6.91463 & -0.677894&-0.391382&-0.622318 & 0.5&-0.866025&0.0 \\
      9 & 6.91463 & 0.0&0.782764&-0.622318 & 1.0&0.0&0.0 \\
     10 & 5.14766 & 0.0&0.0&-1.0 & 0.0&0.0&1.0 \\
     11 & 4.38052 & 0.0247105&0.0142666&-0.999593 & 0.826334&0.477084&0.299271 \\
     12 & 4.38052 & -0.0247105&0.0142666&-0.999593 & 0.826334&-0.477084&-0.299271 \\
     13 & 4.38052 & 0.0&-0.0285332&-0.999593 & 0.0&0.954168&-0.299271 \\
    \hline
  \end{tabular}
  \end{center}
\end{table}

%\ce{Na2O} belongs to the $3$ crystallographic point group.
%There are six independent parameters in the piezoelectric tensor $\A_{Na2O}$
%\begin{eqnarray*}
%    & a_{111}=-a_{122}=-a_{212}=0.80022, \quad a_{123}=-a_{213}=0.138975, \quad a_{113}=a_{223}=-0.16963, & \\
%    & a_{222}=-a_{112}==-a_{211}=-0.043735, \quad a_{311}=a_{322}=-2.339395, \quad\text{ and } a_{333}=-9.61902. &
%\end{eqnarray*}
%Other elements of the piezoelectric tensor is zero.
%Positive C-eigenvalue of $\A_{Na2O}$
%and associated C-eigenvectors are reported in Table \ref{C-eigen-Na2O}.
%
%\begin{table}[bph]
%  \begin{center}
%  \caption{C-eigenvalues and associated C-eigenvectors of
%  the piezoelectric tensor of $\alpha$-quartz.}\label{C-eigen-Na2O}
%  \begin{tabular}{c|c|rrr|rrr}
%    \hline
%    No. & $\lambda$ & \multicolumn{3}{c|}{$\x^\T$} & \multicolumn{3}{c}{$\y^\T$}  \\ \hline
%      1 & 9.61902 & 0.&0.&-1. & 0.&0.&1. \\
%      2 & 2.47286 & 0.2726&-0.17527&-0.946028 & 0.951413&0.307917&0. \\
%      3 & 2.47286 & 0.0154875&0.323714&-0.946028 & 0.74237&-0.66999&0. \\
%      4 & 2.47286 & -0.288088&-0.148444&-0.946028 & 0.209043&0.977906&0. \\
%      5 & 2.4721 & 0.286464&0.147607&-0.946652 & 0.977851&-0.209031&0.0106522 \\
%      6 & 2.4721 & -0.271064&0.174281&-0.946652 & 0.307899&-0.951359&-0.0106521 \\
%      7 & 2.4721 & -0.0153997&-0.321888&-0.946652 & 0.669952&0.742327&-0.0106521 \\
%    \hline
%  \end{tabular}
%  \end{center}
%\end{table}

\textbf{Example 5.}
The compound \ce{NaBiS2} belongs to the $mm2$ crystallographic point group \cite{dCGAP-15}.
There are five independent parameters in the piezoelectric tensor $\A_{NaBiS2}$
\begin{eqnarray*}
    & a_{113}=-8.90808, \quad a_{223}=-0.00842, \quad a_{311}=-7.11526, & \\
    & a_{322}=-0.6222, \quad\text{ and }\quad a_{333}=-7.93831. &
\end{eqnarray*}
Other elements of the piezoelectric tensor are zeros.
Positive C-eigenvalue of $\A_{NaBiS2}$
and associated C-eigenvectors are reported in Table \ref{C-eigen-NaBiS2}.

\begin{table}[tbph]
  \begin{center}
  \caption{Positive C-eigenvalues of the piezoelectric tensor of \ce{NaBiS2}.}\label{C-eigen-NaBiS2}
  \begin{tabular}{c|c|rrr|rrr}
    \hline
    No. & $\lambda$ & \multicolumn{3}{c|}{$\x^\T$} & \multicolumn{3}{c}{$\y^\T$}  \\ \hline
     1 & 11.6674 & 0.762919&0.0&-0.646494 & 0.693139&0.0&-0.720804 \\
     2 & 11.6674 & -0.762919&0.0&-0.646494 & 0.693139&0.0&0.720804 \\
     3 & 7.93831 & 0.0&0.0&-1.0 & 0.0&0.0&1.0 \\
     4 & 7.11526 & 0.0&0.0&-1.0 & 1.0&0.0&0.0 \\
     5 & 0.6222 & 0.0&0.0&-1.0 & 0.0&1.0&0.0 \\
    \hline
  \end{tabular}
  \end{center}
\end{table}

%\ce{LiMnO2} belongs to the $m$ crystallographic point group.
%There are ten independent parameters in the piezoelectric tensor $\A_{LiMnO2}$
%\begin{eqnarray*}
%    & a_{111}=2.34136,  \quad a_{122}=0.06249, \quad a_{133}=0.50554, \quad a_{113}=-0.70406, & \\
%    & a_{223}=-0.40829, \quad a_{212}=0.76378, \quad a_{311}=0.33075, \quad a_{322}=-0.23931, & \\
%    & a_{333}=-0.02956, \quad\text{ and }\quad a_{313}=0.21625. &
%\end{eqnarray*}
%Other elements of the piezoelectric tensor is zero.
%Positive C-eigenvalue of $\A_{LiMnO2}$
%and associated C-eigenvectors are reported in Table \ref{C-eigen-LiMnO2}.
%
%\begin{table}[bph]
%  \begin{center}
%  \caption{C-eigenvalues and associated C-eigenvectors of
%  the piezoelectric tensor of $\alpha$-quartz.}\label{C-eigen-LiMnO2}
%  \begin{tabular}{c|c|rrr|rrr}
%    \hline
%    No. & $\lambda$ & \multicolumn{3}{c|}{$\x^\T$} & \multicolumn{3}{c}{$\y^\T$}  \\ \hline
%     1 & 2.5855 & 0.997928&0.&0.0643332 & 0.949448&0.&-0.313925 \\
%     2 & 0.294235 & 0.925346&0.&0.379124 & 0.274158&0.&0.961685 \\
%     3 & 0.247334 & 0.252654&0.&-0.967557 & 0.&1.&0. \\
%     4 & 0.170592 & 0.870132&0.121659&-0.477566 & 0.264222&-0.808591&0.525706 \\
%     5 & 0.170592 & 0.870132&-0.121659&-0.477566 & 0.264222&0.808591&0.525706 \\
%    \hline
%  \end{tabular}
%  \end{center}
%\end{table}

\textbf{Example 6.}
The compound \ce{LiBiB2O5} belongs to the $2$ crystallographic point group \cite{dCGAP-15}.
There are eight independent parameters in the piezoelectric tensor $\A_{LiBiB2O5}$
\begin{eqnarray*}
    & a_{123}=2.35682, \quad a_{112}=0.34929, \qquad a_{211}=0.16101, \qquad a_{222}=0.12562, & \\
    & a_{233}=0.1361, \qquad a_{213}=-0.05587, \qquad a_{323}=6.91074, \quad\text{ and }\quad a_{312}=2.57812. &
\end{eqnarray*}
Other elements of the piezoelectric tensor are zeros.
Positive C-eigenvalue of $\A_{LiBiB2O5}$
and associated C-eigenvectors are reported in Table \ref{C-eigen-LiBiB2O5}.

\begin{table}[bph]
  \begin{center}
  \caption{Positive C-eigenvalues of the piezoelectric tensor of \ce{LiBiB2O5}.}\label{C-eigen-LiBiB2O5}
  \begin{tabular}{c|c|rrr|rrr}
    \hline
    No. & $\lambda$ & \multicolumn{3}{c|}{$\x^\T$} & \multicolumn{3}{c}{$\y^\T$}  \\ \hline
     1 & 7.73762 & 0.302351&0.0148322&0.953081 & 0.234203&0.707114&0.667187 \\
     2 & 7.73762 & -0.302351&0.0148322&-0.953081 & 0.234203&-0.707114&0.667187 \\
     3 & 0.499616 & 0.902379&0.320695&-0.287865 & 0.675213&-0.698513&-0.236998 \\
     4 & 0.499616 & -0.902379&0.320695&0.287865 & 0.675213&0.698513&-0.236998 \\
     5 & 0.205796 & 0.0&1.0&0.0 & 0.780252&0.0&-0.625465 \\
     6 & 0.12562 & 0.0&1.0&0.0 & 0.0&1.0&0.0 \\
     7 & 0.0913135 & 0.0&1.0&0.0 & 0.625465&0.0&0.780252 \\
    \hline
  \end{tabular}
  \end{center}
\end{table}

\textbf{Example 7.}
The compound \ce{KBi2F7} belongs to the $1$ crystallographic point group \cite{dCGAP-15}.
There are eighteen independent parameters in the piezoelectric tensor $\A_{KBi2F7}$
\begin{eqnarray*}
    & a_{111}=12.64393, \quad a_{122}=1.08802, \quad a_{133}=4.14350, \quad a_{123}=1.59052, & \\
    & a_{113}=1.96801, \quad a_{112}=0.22465, \quad a_{211}=2.59187, \quad a_{222}=0.08263, & \\
    & a_{233}=0.81041, \quad a_{223}=0.51165, \quad a_{213}=0.71432, \quad a_{212}=0.10570, & \\
    & a_{311}=1.51254, \quad a_{322}=0.68235, \quad a_{333}=-0.23019, \quad a_{323}=0.19013, & \\
    & a_{313}=0.39030, \quad\text{ and }\quad a_{312}=0.08381. &
\end{eqnarray*}
Positive C-eigenvalue of $\A_{KBi2F7}$
and associated C-eigenvectors are reported in Table \ref{C-eigen-KBi2F7}.

\begin{table}[tbph]
  \begin{center}
  \caption{Positive C-eigenvalues of the piezoelectric tensor of \ce{KBi2F7}.}\label{C-eigen-KBi2F7}
  \begin{tabular}{c|c|rrr|rrr}
    \hline
    No. & $\lambda$ & \multicolumn{3}{c|}{$\x^\T$} & \multicolumn{3}{c}{$\y^\T$}  \\ \hline
     1 & 13.5021 & 0.970501&0.209737&0.118907 & 0.972258&0.0506481&0.228363 \\
     2 & 4.46957 & 0.981961&0.189047&-0.00361752 & 0.22771&-0.414908&-0.880908 \\
     3 & 0.544863 & 0.759805&-0.368785&0.535439 & 0.0616756&0.870474&-0.488334 \\
    \hline
  \end{tabular}
  \end{center}
\end{table}

\textbf{Example 8.}
The compound \ce{BaNiO3} belongs to the $6$ crystallographic point group \cite{dCGAP-15}.
There are three independent parameters in the piezoelectric tensor $\A_{BaNiO3}$
\begin{equation*}
    a_{113}=a_{223}=0.038385, \quad a_{311}=a_{322}=6.89822, \quad\text{ and }\quad a_{333}=27.4628.
\end{equation*}
Other elements of the piezoelectric tensor are zeros.
There are infinite C-eigenvalues of $\A_{BaNiO3}$. We report
positive C-eigenvalue of $\A_{BaNiO3}$
and associated C-eigenvectors in Table \ref{C-eigen-BaNiO3}.

\begin{table}[tbph]
  \begin{center}
  \caption{Positive C-eigenvalues of the piezoelectric tensor of \ce{BaNiO3}.}\label{C-eigen-BaNiO3}
  \begin{tabular}{c|c|rrr|rrr}
    \hline
    No. & $\lambda$ & \multicolumn{3}{c|}{$\x^\T$} & \multicolumn{3}{c}{$\y^\T$}  \\ \hline
     1 & 27.4628 & 0.0&0.0&1.0 & 0.0&0.0&1.0 \\
     2 & 6.89822 & 0.0&0.0&1.0 & $y_1$&$\pm\sqrt{1-y_2^2}$&0.0 \\
    \hline
  \end{tabular}
  \end{center}
\end{table}

\section{Difference from Matrix Singular Values}

An $n$-by-$n$ symmetric matrix $S=[s_{ij}]$ contains $\frac{n(n+1)}{2}$ independent elements.
Hence we may vectorize $S$ as a vector
\begin{equation*}
    \vt{vec}(S) = (s_{11},s_{22},\dots,s_{nn},\sqrt{2}s_{(n-1)n},\dots,\sqrt{2}s_{12})^\T
      \in\REAL^{\frac{n(n+1)}{2}}.
\end{equation*}
Here, we equip off-diagonal elements of $S$ with coefficient $\sqrt{2}$,
while diagonal elements of $S$ are with coefficient $1$.
Hence, we have $\|S\|_F=\|\vt{vec}(S)\|_2$.

Let $\A\in\REAL^{n\times n\times n}$ be a piezoelectric-type tensor
that contains $\frac{n^2(n+1)}{2}$ independent elements.
Owing to the symmetry of latter two indices of $\A$,
we could represent each symmetric slice-matrix as a vector.
Collecting these vectors, we obtain an $n$-by-$\frac{n(n+1)}{2}$ matrix
\begin{equation}
M(\A) = \left(
  \begin{array}{ccccccc}
    a_{111} & a_{122} & \cdots & a_{1nn} & \sqrt{2}a_{1(n-1)n} & \cdots & \sqrt{2}a_{112} \\
    \vdots  & \vdots  & \vdots & \vdots  & \vdots       & \vdots & \vdots   \\
    a_{n11} & a_{n22} & \cdots & a_{nnn} & \sqrt{2}a_{n(n-1)n} & \cdots & \sqrt{2}a_{n12} \\
  \end{array}
\right).
\end{equation}
Here, each row of the above matrix corresponds to a symmetric slice-matrix of the piezoelectric tensor.
Let $\y=(y_1,\dots,y_n)^\T\in\REAL^n$.
By some calculations, we find that
$$ \A\y\y=M(\A)\vt{vec}(\y\y^\T). $$

\begin{Theorem}
  Let $\lambda^*$ and $\mu^*$ be the largest C-eigenvalue of a piezoelectric-type tensor $\A$
  and the largest singular value of the matrix $M(\A)$, respectively.
  Then,
  \begin{equation}\label{eig-sin}
    \lambda^* \leq \mu^*.
  \end{equation}
  In the above inequality,  strict inequality may hold in some cases.
\end{Theorem}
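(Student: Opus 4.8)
The plan is to recognize both $\lambda^*$ and $\mu^*$ as maxima of one and the same bilinear form taken over two nested feasible sets, so that the inequality reduces to monotonicity under set inclusion. Using the identity $\A\y\y = M(\A)\vt{vec}(\y\y^\T)$ established just above, I would first rewrite the objective of the C-eigenvalue problem \eqref{maximum} as
\begin{equation*}
  \x\A\y\y = \x^\T(\A\y\y) = \x^\T M(\A)\,\vt{vec}(\y\y^\T),
\end{equation*}
so that Theorem \ref{Th-2.3}(c) gives $\lambda^* = \max\{\x^\T M(\A)\vt{vec}(\y\y^\T) : \x^\T\x = 1,\ \y^\T\y = 1\}$.

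Next I would compute the $2$-norm of the vector $\vt{vec}(\y\y^\T)$. Because the vectorization is constructed precisely so that $\|S\|_F = \|\vt{vec}(S)\|_2$, and since $\|\y\y^\T\|_F = \|\y\|_2^2$, the constraint $\y^\T\y = 1$ forces $\|\vt{vec}(\y\y^\T)\|_2 = 1$. Hence every feasible pair $(\x,\y)$ of the C-eigenvalue problem produces a pair $(\x,\z)$ with $\z = \vt{vec}(\y\y^\T)$ of unit $2$-norm. On the other hand, the variational characterization of the largest singular value reads $\mu^* = \max\{\x^\T M(\A)\z : \x^\T\x = 1,\ \z^\T\z = 1\}$, where $\z$ ranges over the \emph{entire} unit sphere of $\REAL^{n(n+1)/2}$. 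The admissible vectors $\z = \vt{vec}(\y\y^\T)$, namely the vectorizations of rank-one symmetric positive semidefinite matrices of unit Frobenius norm, form a proper subset of that sphere, so maximizing the same objective over the larger set can only increase the value. This yields $\lambda^* \leq \mu^*$ immediately.

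For the assertion that strict inequality sometimes holds, I would exhibit one explicit tensor rather than argue abstractly. The cleanest candidate is $\A(\alpha)$ of the $\bar{4}3m$ class with $a_{123}=a_{213}=a_{312}=-\alpha$. Reading off its symmetric slice matrices shows that $M(\A(\alpha))$ has its only nonzero entries equal to $-\sqrt{2}\,\alpha$, one in each of the three off-diagonal columns and one per row; thus $M(\A(\alpha))$ equals $-\sqrt{2}\,\alpha$ times a matrix with orthonormal rows, whence $\mu^* = \sqrt{2}\,|\alpha|$. By Proposition \ref{Pr-4.1}, however, the largest C-eigenvalue of $\A(\alpha)$ is $\lambda^* = 2|\alpha|/\sqrt{3}$. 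Since $2/\sqrt{3} < \sqrt{2}$, we obtain $\lambda^* < \mu^*$, and this already matches the numerical values reported for \ce{VFeSb} in Example 1.

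The argument carries no genuinely hard analytic step: once the objective is written in the common bilinear form $\x^\T M(\A)\z$, the inequality is pure set-inclusion monotonicity. The only point requiring care is the norm bookkeeping, namely verifying that $\vt{vec}(\y\y^\T)$ lands exactly on the unit sphere (so that the two maximizations genuinely share the same constraint on the second variable) and that the rank-one positive semidefinite locus is a strict subset of that sphere. The small remaining effort is the explicit singular-value computation of $M(\A(\alpha))$ used to witness strictness.
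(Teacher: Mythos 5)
Your proof of the inequality itself is correct and essentially identical to the paper's: both rewrite $\x\A\y\y$ as $\x^\T M(\A)\vt{vec}(\y\y^\T)$, check that $\|\vt{vec}(\y\y^\T)\|_2=\|\y\y^\T\|_F=\|\y\|_2^2=1$, and conclude by comparing the C-eigenvalue maximization with the singular-value maximization over the full unit sphere in $\REAL^{n(n+1)/2}$ (the paper phrases this as feasibility of the particular point $(\x^*,\vt{vec}(\y^*{\y^*}^\T))$ rather than as set inclusion, but that is the same argument). Where you genuinely diverge is the witness for strictness. The paper introduces a fresh two-dimensional tensor $\A_0$ with $a_{112}=a_{222}=1$, computes $\mu^*=\sqrt{2}$ from the $2\times 3$ matrix $M(\A_0)$, and then solves the entire C-eigenvalue polynomial system \eqref{C-eig} for $\A_0$ by a case analysis to find $\lambda^*=2/\sqrt{3}$. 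You instead reuse the three-dimensional tensor $\A(\alpha)$ of the $\bar{4}3m$ class: since $M(\A(\alpha))=-\sqrt{2}\,\alpha\,(0\,|\,I_3)$ has orthogonal rows of norm $\sqrt{2}|\alpha|$, you get $\mu^*=\sqrt{2}|\alpha|$ for free, and Proposition \ref{Pr-4.1} already supplies $\lambda^*=2|\alpha|/\sqrt{3}$, so $\lambda^*<\mu^*$ follows from $2/\sqrt{3}<\sqrt{2}$. Your route is more economical because it leans on work the paper has already done and ties the counterexample to a physically realized crystal class (Example 1); the paper's route is self-contained and shows the phenomenon already in dimension two, at the cost of redoing a full eigen-system computation. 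Both are valid; I find no gap in your argument.
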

\begin{proof}
  From Theorem 2.3 (c), $\lambda^*$ is the optimal objective value of
  the following maximization problem
  $$ \lambda^* = \max~\{\x\A\y\y : \|\x\|_2=1, \|\y\|_2=1 \}. $$
  We denote the corresponding optimal solution as $(\x^*,\y^*)$.
  Obviously, $\|\vt{vec}(\y^*{\y^*}^\T)\|_2=\|\y^*{\y^*}^\T\|_F\\=\|\y^*\|_2^2=1$.

  By matrix theory, $\mu^*$ is the optimal objective value of
  $$ \mu^* = \max~\{\x^\T M(\A)\z : \|\x\|_2=1, \|\z\|_2=1 \}. $$
  It is easy to see that $(\x^*,\vt{vec}(\y^*{\y^*}^\T))$ is a feasible of this problem.
  Hence, $$ \mu^* \geq {\x^*}^\T M(\A)\vt{vec}(\y^*{\y^*}^\T) = \x^*\A\y^*\y^*=\lambda^*. $$

  \medskip
We now give an example for which strict inequality holds in \eqref{eig-sin}.
Consider a piezoelectric tensor $\A_{0}$ with dimension $2$.
There are only two nonzero elements in $\A_{0}$: $$ a_{112}=a_{222}=1. $$
Hence,
\begin{equation}
M(\A_{0}) = \left(
  \begin{array}{ccccccc}
    0 & 0 & \sqrt{2} \\
    0 & 1 & 0  \\
  \end{array}
\right).
\end{equation}
By some calculation, the largest singular value of $M(\A_{0})$ is $\sqrt{2}$,
and the associated singular vectors are $(1,0)^\T$ and $(0,0,1)^\T$.

Then, we turn to C-eigenvalues of $\A_{0}$.
The system \eqref{C-eig} reduces to
\begin{subequations}
\begin{align}
  2y_1y_2 &= \lambda x_1, \label{E1-a}\\
   y_2^2  &= \lambda x_2, \label{E1-b}\\
         x_1y_2 &= \lambda y_1, \label{E1-c}\\
  x_1y_1+x_2y_2 &= \lambda y_2, \label{E1-d}\\
  x_1^2+x_2^2 &= 1, \label{E1-e}\\
  y_1^2+y_2^2 &= 1. \label{E1-f}
\end{align}
\end{subequations}
If $\lambda=0$, we know $y_2=0$ by \eqref{E1-b}.
From \eqref{E1-f}, we have $y_1=\pm 1$.
Using \eqref{E1-d}, we get $x_1=0$.
By \eqref{E1-e}, we obtain $x_2=\pm 1$.
Hence, we obtain a group of solution
\begin{equation*}
  \lambda_1^*=0,  \qquad \x_1^*=(0,1)^\T, \qquad \y_1^*=(1,0)^\T.
\end{equation*}

In the case of $\lambda\neq 0$ and $y_2=0$, we get $y_1=0$ by \eqref{E1-c},
which contradicts \eqref{E1-f}.

Now, we consider the case that $\lambda\neq 0$ and $y_2\neq 0$.
By \eqref{E1-b}, we know $x_2\neq 0$.
From \eqref{E1-a} and \eqref{E1-b}, we have $$\frac{x_1}{x_2} = \frac{2y_1}{y_2}\equiv t.$$
Then, $x_1=tx_2$ and $y_1=\frac{t}{2}y_2$.
From \eqref{E1-c} and \eqref{E1-d}, we know
$$ \frac{t}{2} = \frac{y_1}{y_2} = \frac{x_1y_2}{x_1y_1+x_2y_2}
= \frac{tx_2y_2}{\frac{t^2}{2}x_2y_2+x_2y_2} =\frac{2t}{t^2+2}. $$
Solving this equation in $t$, we get
$$ t_1=\sqrt{2}, \qquad t_2=-\sqrt{2}, \qquad\text{ and }\qquad t_3=0. $$
If $t=\sqrt{2}$, by $x_1=\sqrt{2}x_2$ and \eqref{E1-e}, we know
$x_1=\pm\frac{\sqrt{2}}{\sqrt{3}}$ and $x_2=\pm\frac{1}{\sqrt{3}}$.
By $y_1=\frac{t}{2}y_2$ and \eqref{E1-f}, we have
$y_1=\pm\frac{1}{\sqrt{3}}$ and $y_2=\pm\frac{\sqrt{2}}{\sqrt{3}}$.
% From Theorem 2.3 (b), plus and minus signs are all accessible.
% Here, we choose $y_1=\frac{1}{\sqrt{3}}$ and $y_2=\frac{\sqrt{2}}{\sqrt{3}}$.
Then, by \eqref{E1-b}, we know $\lambda=\pm \frac{2}{\sqrt{3}}$.
In sum, we have a solution
\begin{equation*}
  \lambda_2^*=\frac{2}{\sqrt{3}},  \qquad \x_2^*=\left(\frac{\sqrt{2}}{\sqrt{3}},\frac{1}{\sqrt{3}}\right)^\T,
     \qquad \y_2^*=\left(\frac{1}{\sqrt{3}},\frac{\sqrt{2}}{\sqrt{3}}\right)^\T.
\end{equation*}

Using similar discussions, we get two more solutions
\begin{equation*}
  \lambda_3^*=\frac{2}{\sqrt{3}}, \qquad \x_3^*=\left(-\frac{\sqrt{2}}{\sqrt{3}},\frac{1}{\sqrt{3}}\right)^\T,
    \qquad \y_3^*=\left(\frac{1}{\sqrt{3}},-\frac{\sqrt{2}}{\sqrt{3}}\right)^\T,
\end{equation*}
and
\begin{equation*}
  \lambda_4^*=1, \qquad \x_4^*=(0,1)^\T, \qquad  \y_4^*=(0,1)^\T.
\end{equation*}
Hence, the largest C-eigenvalue of $\A_{0}$ is $\frac{2}{\sqrt{3}}$.
Obviously, $\frac{2}{\sqrt{3}}<\sqrt{2}$. Hence, strict inequality in \eqref{eig-sin} holds.
  The proof is complete.
\end{proof}

% The example in the proof of the above theorem shows that the C-eigenvalue theory of piezoelectric tensors cannot be replaced by matrix singular value theory.

\section{Summary}

We defined C-eigenvalues and C-eigenvectors for a piezoelectric tensor,
where ``C'' names after Curie brothers.
The existence of C-eigenvalues and its invariance under orthonormal transformations
were also addressed.
We argued that for a piezoelectric tensor,
the largest C-eigenvalue corresponds to the electric displacement vector with
the largest $2$-norm in the piezoelectric effect under unit uniaxial stress,
and the strain tensor with the largest $2$-norm in the converse piezoelectric effect
under unit electric field vector.
Thus, C-eigenvalues and C-eigenvectors have concrete physical meanings in
piezoelectric effect and converse piezoelectric effect.
Finally, we present C-eigenvalues and associated C-eigenvectors of
piezoelectric tensors for various crystal classes.

\end{document}